\documentclass[11pt]{article}

\usepackage[a4paper,margin=1in]{geometry}
\usepackage[T1]{fontenc}
\usepackage[utf8]{inputenc}
\usepackage{lmodern}
\usepackage{microtype}
\usepackage{amsmath,amssymb,amsfonts,mathtools}
\usepackage{amsthm}
\usepackage{bm}
\usepackage{booktabs}
\usepackage{pgfplots}
\usepackage{hyperref}   
\usepackage{cleveref}  
\pgfplotsset{compat=1.18}

\title{A Thermodynamic SU(1,1) Witness Framework for Double-Quantum NMR Signals in Neural Tissue}

\author{Christian Kerskens \\
\small Trinity College Institute of Neuroscience}
\date{\today}

\newtheorem{theorem}{Theorem}
\newtheorem{remark}{Remark}

\begin{document}

\maketitle

\begin{abstract}
Entanglement criteria based on variances or Fisher information are well developed for compact collective spin algebras, but their extension to non-compact dynamical sectors is less straightforward. In particular, double-quantum (DQ) observables associated with effective SU(1,1) structures can lead to formally unbounded classical fluctuation estimates unless additional physical constraints are imposed.

In this note, we develop a thermodynamic witness framework in which the classically accessible fluctuation sector is strictly bounded by finite-temperature detailed-balance conditions and motionally narrowed sequence-transfer limits. By analyzing the quantum dynamical semigroup of the spin-bath interaction, we demonstrate that spontaneous transient pair correlations generated by a stationary incoherent bath are contractively capped near an amplitude of \(10^{-9}\). Furthermore, classical coherent sequence amplification is empirically bounded to \(\mathcal{O}(10^{-2})\) in motionally narrowed tissue. 

The resulting functional provides a concrete, theoretically derived bounding framework against which macroscopic DQ anomalies (e.g., fractional amplitudes on the order of \(10\%\) to \(15\%\)) can be rigorously classified as classically inexplicable, provided macro-scale structural stability (constant \(T_2^*\)) is empirically verified.
\end{abstract}

\section{Introduction}
Entanglement witnesses based on collective variances or quantum Fisher information are standard tools in compact spin systems \cite{Pezze2018}, where the relevant generators typically belong to SU(2)-type algebras with well-controlled separable bounds. In double-quantum (DQ) NMR settings, however, the effective dynamical variables may instead organize into a non-compact SU(1,1)-like algebra. In such cases, naive extensions of compact-algebra witness bounds can become ill-posed unless the physically accessible excitation sector is constrained.

The present note proposes a concrete framework in which a finite witness threshold is obtained by supplementing the spin-sector description with explicit open-quantum-system constraints. The central idea is that if the biological medium acts as a finite-temperature bath, the classical fluctuations and sequence-driven transfers available to generate macroscopic DQ processes must be strictly bounded.

We decompose the classical baseline into two mechanisms: (1) spontaneous excitation by an incoherent thermal bath, bounded by detailed balance, and (2) coherent sequence-driven transfer, bounded by the tissue's motional narrowing. We argue that observables routinely used to monitor physiological stability---such as \(T_2^*\) and single-quantum coherence (SQC)---provide empirical constraints that seal potential macroscopic loopholes.

\section{Algebraic Classification: Compact and Non-Compact Two-Spin Sectors}
\label{sec:algebra}
The two-spin-\(\tfrac12\) operator algebra \(\mathfrak{su}(4)\)
decomposes by coherence order into dynamically closed subalgebras. This
classification is essential both for the witness construction developed
here and for the companion covariance-geometric analysis of the same
signal~\cite{Paper3}.

\subsection{Zero-quantum sector: compact SU(2)}

The zero-quantum (ZQ) sector (\(p=0\)) is spanned by
\[
  \mathcal{B}_{\mathrm{ZQ}}
  = \{I_{1z},\; I_{2z},\; I_{1+}I_{2-},\; I_{1-}I_{2+}\}.
\]
Within this space, the compact algebra
\begin{equation}
  \mathfrak{su}(2)_{\mathrm{ZQ}}:\quad
  \left\{I_{1+}I_{2-},\; I_{1-}I_{2+},\;
    \tfrac12(I_{1z}-I_{2z})\right\}
  \label{eq:su2ZQ}
\end{equation}
closes under commutation and generates bounded oscillatory exchange
dynamics at the difference frequency
\(\Omega_- = \omega_1 - \omega_2\)~\cite{Ernst1990}.

\subsection{Double-quantum sector: non-compact SU(1,1)}

The double-quantum (DQ) sector (\(|p|=2\)) contains the pair operators
\(\{I_{1+}I_{2+},\; I_{1-}I_{2-}\}\). Together with the diagonal
generator \(\tfrac12(I_{1z}+I_{2z})\), these close to form the
non-compact algebra
\begin{equation}
  \mathfrak{su}(1,1)_{\mathrm{DQ}}:\quad
  \left\{K_+ = I_{1+}I_{2+},\; K_- = I_{1-}I_{2-},\;
    K_0 = \tfrac12(I_{1z}+I_{2z})\right\},
  \label{eq:su11DQ}
\end{equation}
satisfying
\begin{equation}
  [K_0, K_\pm] = \pm K_\pm,
  \qquad
  [K_-, K_+] = 2K_0.
  \label{eq:su11comm}
\end{equation}
The generators \(K_\pm\) carry coherence order \(p = \pm 2\) and
connect the aligned states \( |\!\uparrow\uparrow\rangle \) and
\( |\!\downarrow\downarrow\rangle \).

\subsection{Dynamical distinction}

The crucial distinction lies in the geometry of the dynamics. Compact
\(\mathfrak{su}(2)\) possesses a positive-definite Killing form and
generates purely imaginary adjoint eigenvalues, leading to bounded
oscillatory evolution. The non-compact algebra \(\mathfrak{su}(1,1)\)
has indefinite Killing-form signature, admits real adjoint eigenvalues,
and supports hyperbolic trajectories
\cite{Perelomov1986,GerryKnight2004}:
\begin{equation}
  S_{\mathrm{SU(2)}}(t) \propto \sin(\Omega_- t),
  \qquad
  S_{\mathrm{SU(1,1)}}(t) \propto \sinh(g\, t).
  \label{eq:su2su11signal}
\end{equation}
A proof that compact algebras cannot produce hyperbolic growth is given
in \Cref{app:hyperbolicity}.

\begin{remark}[Unbounded quantum sector]
\label{rem:unbounded}
Unlike in the compact SU(2) case, the non-compact SU(1,1)
representation space admits coherent states with arbitrarily large
pair-correlation amplitudes~\cite{Perelomov1986}. The quantum sector is
therefore not subject to an analogous ceiling. This ensures that the
witness construction is not vacuous: the classical bound derived below
excludes classical mechanisms, while the quantum sector can in principle
accommodate the macroscopic signals observed.
\end{remark}

\subsection{The need for regularization}

Classical separability imposes inequalities of the form
\(S_{\mathrm{DQ}} \leq \mathcal{C}\,\langle K_0 \rangle^2\), where
\(\mathcal{C}\) is an algebraic factor and \(S_{\mathrm{DQ}}\) is the
measured DQ response. Because SU(1,1) is non-compact, \(\langle
K_0\rangle\) is not bounded by the algebra alone. The resulting
inequality ceases to be useful as a witness threshold unless the
physically accessible excitation sector is constrained by additional
physical assumptions.

To eliminate symbolic abstraction, it is experimentally robust to
express the witness in terms of the measurable fractional DQ amplitude
\(f_{\mathrm{DQ}} = S_{\mathrm{DQ}} / M_0\), calibrated against the
absolute thermal-equilibrium single-quantum magnetization \(M_0\).
Bounding \(f_{\mathrm{DQ}}\) requires identifying the physical limits
on classically accessible pair-correlation generation.

\section{Classical Limits: Incoherent Bath vs. Coherent Transfer}
To address the dynamics of transient signals, we must distinguish between spontaneous transient excitation by the thermal bath and coherent amplification by the applied RF pulse sequence.

\paragraph{1. The Incoherent Bath Limit (\(\epsilon_{\mathrm{th}}\)).}
Between coherent RF pulses, the spin system \(\rho_S\) evolves under the local residual dipolar coupling and the surrounding thermal bath. Under the standard Born-Markov approximation, this open-system dynamics is governed by a quantum dynamical semigroup, typically expressed as a Lindblad or Redfield master equation \cite{Breuer2002}. 

For a stationary thermal bath, the dissipator strictly satisfies the Kubo-Martin-Schwinger (KMS) detailed-balance condition, establishing the thermal Gibbs state \(\rho_{\mathrm{th}}\) as the unique asymptotic fixed point. A fundamental property of completely positive trace-preserving (CPTP) semigroups is the contractivity of the relative entropy \cite{Spohn1978}. Because the distance between the instantaneous state \(\rho_S(t)\) and \(\rho_{\mathrm{th}}\) decreases monotonically, an incoherent stationary bath cannot transiently pump pair correlations from a less-correlated state to a level exceeding the thermal equilibrium baseline. Any transient excursion from a baseline state approaches the thermal ceiling strictly from below.

The thermodynamic ceiling on classically spontaneous pair-correlations is therefore governed by the high-temperature expansion parameter:
\begin{equation}
    \epsilon_{\mathrm{th}} = \frac{\hbar \omega_D}{kT}.
\end{equation}
For restricted water protons in neural tissue (\(T \approx 310\) K), assuming a maximal local dipolar fluctuation amplitude of \(\omega_D/2\pi \approx 10\) kHz, the interaction energy is \(\hbar \omega_D \approx 6.6 \times 10^{-30}\) J. Consequently, \(\epsilon_{\mathrm{th}} \approx 1.5 \times 10^{-9}\). Contractivity dictates that a stationary classical bath cannot transiently force the spin system into a correlated state exceeding \(\mathcal{O}(10^{-9})\).

\paragraph{2. Coherent Sequence Amplification (\(\eta_{\mathrm{seq}}\)).}
A transient DQ response could be driven classically if the MQC pulse sequence coherently transfers existing Zeeman magnetization into multiple-quantum order. In solid-state NMR with rigid structural networks \cite{Baum1985}, pulse sequences efficiently convert thermal order into macroscopic MQC signals.

Such classical sequence amplification relies intrinsically on strong, \textit{static} (time-averaged) dipolar couplings. In neural tissue, extreme motional narrowing attenuates the residual static coupling \(\overline{\omega}_D\). For water in highly anisotropic environments (e.g., myelin), \(\overline{\omega}_D/2\pi\) is at most a few Hz. Given typical mixing times \(t_m \sim 5\) ms, the fractional sequence transfer scales roughly as:
\begin{equation}
    \eta_{\mathrm{seq}} \sim (\overline{\omega}_D t_m)^2 \approx (2\pi \times 5 \text{ s}^{-1} \times 0.005 \text{ s})^2 \approx 2.5 \times 10^{-2}.
\end{equation}
This estimate is not intended as a universal theorem on classical
sequence transfer. The scaling $(\overline{\omega}_D t_m)^2$ is an
order-of-magnitude estimate for short-time coherent transfer under weak
static couplings; the specific values of $\overline{\omega}_D$ and
$t_m$ are chosen to be generous upper estimates for the relevant tissue
regime. The precise numerical prefactor is therefore open to reasonable
dispute. The point, however, is structural: even under deliberately
optimistic classical assumptions, the resulting transfer ceiling remains
an order of magnitude below the reported macroscopic signal amplitude of
$0.15$. The argument would only fail if $\overline{\omega}_D$ were
larger by an order of magnitude---but that would require a
rigid-lattice-like coupling regime incompatible with the observed stable
$T_2^*$.

\section{The Role of the Bath Spectral Density}
While detailed balance imposes the asymptotic ceiling (\(\epsilon_{\mathrm{th}}\)), the timescale required to approach this limit is governed by the bath's spectral density. Assuming an exponentially decaying local fluctuating field with correlation time \(\tau_c\), the standard Bloembergen--Purcell--Pound (BPP) model \cite{Bloembergen1948, Abragam1961} dictates a Lorentzian spectral profile:
\begin{equation}
    J(\omega_0) = \frac{2 \langle \omega_D^2 \rangle \tau_c}{1 + \omega_0^2 \tau_c^2}.
\end{equation}
This profile is illustrated in Figure~\ref{fig:bpp_spectral_density}. For bulk cerebrospinal fluid (CSF), \(\tau_c\) is in the picosecond regime. Extreme motional narrowing (\(\omega_0 \tau_c \ll 1\)) collapses the spectral density, rendering the bath dynamically incapable of driving transitions at the requisite NMR frequencies. For restricted myelin water (\(\tau_c \sim 1\) ns), \(J(\omega_0)\) approaches its maximum. However, it is crucial to recognize that an optimal \(J(\omega_0)\) merely maximizes the \textit{rate} of equilibration; it does not alter the thermodynamic ceiling. The pair-correlation ceiling remains firmly bounded by \(\epsilon_{\mathrm{th}}\).

\begin{figure}[t]
\centering
\begin{tikzpicture}
\begin{axis}[
    width=0.75\textwidth,
    height=0.45\textwidth,
    domain=0:5,
    samples=300,
    xlabel={$x=\omega_0 \tau_c$},
    ylabel={normalized spectral density $\tilde{J}(x)$},
    ymin=0, ymax=1.1,
    xmin=0, xmax=5,
    xtick={0,1,2,3,4,5},
    ytick={0,0.5,1},
    axis lines=left,
    enlargelimits=false,
    clip=false
]
\addplot[thick] {2*x/(1+x^2)};
\addplot[only marks, mark=*, mark size=1.8pt] coordinates {(1,1)};
\node[anchor=west, align=left] at (axis cs:0.18,0.18) {\small fast-motion regime};
\node[anchor=west, align=left] at (axis cs:1.15,1.02) {\small maximal rate (restricted)};
\node[anchor=west, align=left] at (axis cs:3.0,0.55) {\small slow/off-resonant};
\draw[->] (axis cs:0.35,0.32) -- (axis cs:0.08,0.08);
\node[anchor=west] at (axis cs:0.42,0.34) {\small bulk CSF};
\end{axis}
\end{tikzpicture}
\caption{
Normalized spectral density \(\tilde{J}(x)=\frac{2x}{1+x^2}\). The bath spectral density governs the transition \textit{rate} toward equilibrium, but does not alter the thermodynamic \(\epsilon_{\mathrm{th}}\) ceiling enforced by detailed balance.
}
\label{fig:bpp_spectral_density}
\end{figure}
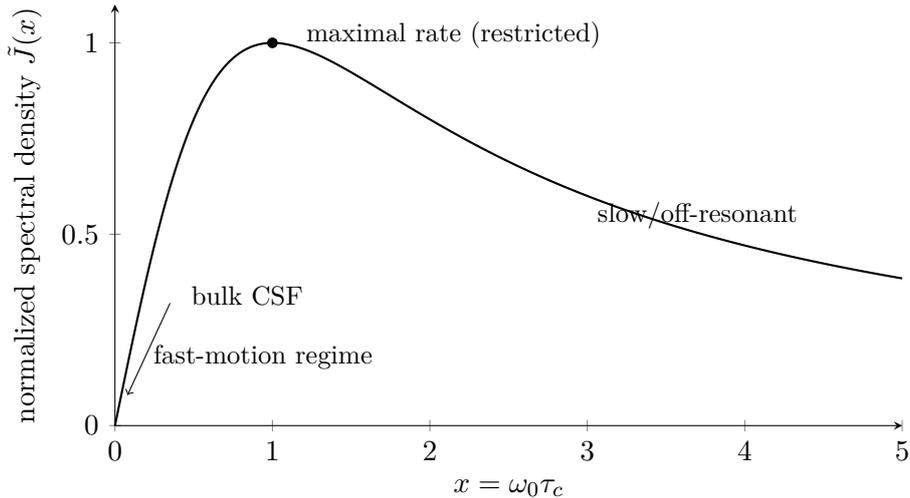

\section{Empirical Constraints and Non-Stationary Baths}
A potential theoretical loophole is that the biological event triggering the DQ burst renders the bath transiently \textit{non-stationary}. A non-stationary bath violates detailed balance and could, in principle, drive the system away from the Gibbs state. 

However, macroscopic observables provide an inescapable empirical constraint. Single-quantum transverse relaxation (\(T_2^*\)) is exquisitely sensitive to both microscopic bath dynamics and macroscopic static dephasing \cite{Ogawa1990}. If a tissue compartment were to experience a structural shift severe enough to create a strongly non-stationary local bath or massively increase \(\overline{\omega}_D\), this sudden restriction in field homogeneity would cause transient line-broadening, resulting in a precipitous drop in \(T_2^*\).

The simultaneous observation of a macroscopic DQ burst alongside a stable, unperturbed \(T_2^*\) provides strict empirical evidence that the underlying dipolar network remained stationary and motionally narrowed. A stable \(T_2^*\) explicitly caps the non-stationary loophole and restricts the coherent transfer parameter \(\eta_{\mathrm{seq}}\) to its negligible baseline. Similarly, tracking Magnetization Transfer (MT) contrast \cite{Henkelman2001} alongside SQC observables prevents loopholes regarding abrupt physiological exchange events between the visible water pool and restricted macromolecular environments.

\section{Definition of the Thermodynamic Witness}
We define the classical bounding threshold directly on the fractional DQ response. The maximum classically accessible fractional signal is bounded by the sum of spontaneous bath generation and coherent sequence transfer:
\begin{equation}
    f_{\mathrm{class}}^{\max} = \epsilon_{\mathrm{th}} + \eta_{\mathrm{seq}}(T_2^*),
\end{equation}
where \(\epsilon_{\mathrm{th}} \sim \mathcal{O}(10^{-9})\) is the detailed-balance limit, and \(\eta_{\mathrm{seq}} \sim \mathcal{O}(10^{-2})\) is the sequence transfer efficiency, empirically capped by the stability of \(T_2^*\). The measurable thermodynamic SU(1,1) witness is defined as:
\begin{equation}
    \mathcal{W}_{\mathrm{th}} = f_{\mathrm{DQ}}^{\mathrm{measured}} - f_{\mathrm{class}}^{\max}.
\end{equation}

\begin{theorem}
For a spin system coupled to a stationary incoherent thermal bath and
subjected to an RF sequence under stable macroscopic dephasing
(constant \(T_2^*\)), the maximal classical fractional DQ amplitude
within the present model class is bounded by \(f_{\mathrm{class}}^{\max}\).
An experimental observation yielding \(\mathcal{W}_{\mathrm{th}} > 0\)
therefore excludes this restricted class of classical stationary
spin-bath and motionally narrowed sequence-transfer mechanisms.
\end{theorem}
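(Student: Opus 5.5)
The argument decomposes the classical DQ signal and bounds each contribution separately, then reads the witness off by contraposition. Within the model class, the spin-bath evolution between RF pulses is a KMS-detailed-balanced quantum dynamical semigroup with Gibbs fixed point $\rho_{\mathrm{th}}$. The RF sequence acts as a sequence of unitary (or, with residual couplings, CPTP) maps. Using linearity of the DQ readout $S_{\mathrm{DQ}}=|\mathrm{Tr}(\rho K_\pm)|$ (normalized by $M_0$), I would write the state at readout as a sum of two terms. The first is the portion of DQ order generated by the dissipator acting on the incoherent population sector. The second is the portion obtained by coherent conversion of Zeeman order $\propto K_0$ through the residual static coupling $\overline{\omega}_D$ during $t_m$. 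Cross terms are dropped at leading order in $\epsilon_{\mathrm{th}}$.

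For the bath term I will invoke Spohn's contractivity of relative entropy. The quantity $D(\rho_S(t)\|\rho_{\mathrm{th}})$ is nonincreasing, so any pair correlation generated spontaneously can only approach the Gibbs value from below. The $K_\pm$ content of $\rho_{\mathrm{th}}$ is then expanded in the high-temperature parameter. The Zeeman part is diagonal and contributes no $p=\pm2$ coherence, so the dipolar term supplies at most $\mathcal{O}(\hbar\omega_D/kT)=\epsilon_{\mathrm{th}}$ relative to $M_0$. This is a Pinsker-type step: converting the monotone relative entropy into a bound on $|\mathrm{Tr}(\rho_S K_\pm)|$ via $\|K_\pm\|\le 1$.

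For the coherent term I would work in the interaction frame. In the DQ sector the residual secular dipolar Hamiltonian has effective strength $\overline{\omega}_D$. Expanding the propagator to second order, the $K_0\to K_\pm$ transfer amplitude is $\sin^2(\overline{\omega}_D t_m/2)$-type, bounded by $(\overline{\omega}_D t_m)^2$. This gives $\eta_{\mathrm{seq}}$. Because the two-spin system lives in a finite-dimensional $\mathfrak{su}(4)$ representation, no $\sinh$ growth of the kind discussed in Remark~\ref{rem:unbounded} is available to this classical, separable-input channel. The hyperbolicity appendix is cited only to note that this bound does not transfer to the quantum sector.

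The step I expect to be the main obstacle is justifying that $\overline{\omega}_D$, and the stationarity of the bath, are certified by constant $T_2^*$. My plan is to treat this as a hypothesis of the model class rather than to derive it. Constant $T_2^*$ is taken to fix the second moment of the local field distribution. A sudden increase in $\overline{\omega}_D$ or in non-stationary bath fluctuations would add to $1/T_2^*$ by the Van Vleck second-moment relation, so stability pins $\overline{\omega}_D$ at its narrowed baseline. With that accepted, summing the two bounds by the triangle inequality gives $f_{\mathrm{DQ}}\le f_{\mathrm{class}}^{\max}$ for every mechanism in the class. Hence $\mathcal{W}_{\mathrm{th}}>0$ contradicts membership in the class, which is the exclusion claimed.
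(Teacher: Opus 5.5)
Your skeleton is the paper's: cap a bath contribution via KMS contractivity, cap a coherent contribution by $(\overline{\omega}_D t_m)^2$ with constant $T_2^*$ as the certificate of narrowing, then add. Treating the $T_2^*$ step as a hypothesis of the class is exactly how the paper uses it. The gap is the one step where you go beyond the paper, the Pinsker conversion, and it fails quantitatively. Monotonicity of $D(\rho_S(t)\|\rho_{\mathrm{th}})$ plus Pinsker gives only
\[
  |\mathrm{Tr}(\rho_S(t)K_\pm)| \le |\mathrm{Tr}(\rho_{\mathrm{th}}K_\pm)| + \|K_\pm\|\,\sqrt{2D(\rho_S(t_0)\|\rho_{\mathrm{th}})}.
\]
The second term is set by how far the RF pulses have already displaced the state from Gibbs. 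After a single $90^\circ$ pulse, $\|\rho_S(t_0)-\rho_{\mathrm{th}}\|_1$ is comparable to $M_0$ itself ($\approx\sqrt2\,M_0$ for two spins at high temperature). The resulting bound on $f_{\mathrm{DQ}}$ is therefore $\mathcal{O}(1)$, not $\epsilon_{\mathrm{th}}$.

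Nor does a decreasing relative entropy make individual observables approach their Gibbs values from below. NOE-type cross-relaxation moves single expectation values away from equilibrium while $D$ decreases. Cross-correlated (e.g.\ dipole--CSA) relaxation, which is compatible with KMS detailed balance, converts pulse-prepared Zeeman or in-phase order into $2I_{1z}I_{2z}$ or antiphase order. The reconversion pulses then map this onto DQ coherence (relaxation-allowed MQ filtering). That contribution scales with a cross-correlated rate times $t_m$, not with $\epsilon_{\mathrm{th}}$. It is either inside your bath term, where Pinsker does not control it, or it is the cross term you discard ``at leading order in $\epsilon_{\mathrm{th}}$''.

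The paper's proof does not bridge this either; it simply asserts that contractivity forbids overshooting $\epsilon_{\mathrm{th}}$. You therefore have two options. You can state the bath ceiling as a defining property of the model class, as the paper effectively does. Or you can replace contractivity by covariance: a secular (Davies) KMS generator preserves the Bohr-frequency eigenspaces of $[H_S,\cdot\,]$. The $p=\pm2$ coherence has the unique Bohr frequency $\pm(\omega_1+\omega_2)$, so under the bath alone it can only decay toward its Gibbs value. Relaxation-generated two-spin order that later pulses convert must then be explicitly excluded from the class or absorbed into $\eta_{\mathrm{seq}}$.

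Separately, your normalization is off. Since $M_0\propto\hbar\omega_0/kT$, the Gibbs-state DQ content relative to $M_0$ is at most of order $\omega_D/\omega_0$ (about $10^{-4}$), not $\hbar\omega_D/kT$. This is still negligible against $0.15$, and the paper's sum $\epsilon_{\mathrm{th}}+\eta_{\mathrm{seq}}$ carries the same mismatch.
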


\begin{proof}
By the contractivity of the quantum dynamical semigroup \cite{Spohn1978}, spontaneous transitions driven by a stationary, incoherent bath satisfying the KMS condition cannot overshoot the thermal equilibrium parameter \(\epsilon_{\mathrm{th}}\). Secondly, classical coherent transfer of Zeeman order requires static dipolar couplings \(\overline{\omega}_D\). The constancy of \(T_2^*\) serves as an empirical boundary condition guaranteeing that \(\overline{\omega}_D\) remains heavily motionally narrowed, restricting \(\eta_{\mathrm{seq}}\) to \(\mathcal{O}(10^{-2})\). Consequently, any macroscopic response \(f_{\mathrm{DQ}} \gg \epsilon_{\mathrm{th}} + \eta_{\mathrm{seq}}\) violates the maximal classical capacities of the system.
\end{proof}

\section{Discussion and Outlook}
The framework proposed here shows that SU(1,1)-type witness constructions need not remain formally unbounded. Once the driving background is restricted by detailed-balance constraints (\(\sim 10^{-9}\)) and sequence transfer is limited by empirically verified motional narrowing (\(\sim 10^{-2}\)), a calculable threshold emerges. 

If an experimentally observed transient DQ response reaches fractional amplitudes on the order of \(15\%\) \cite{Kerskens2022} under stable macroscopic conditions, it shatters the \(f_{\mathrm{class}}^{\max}\) boundary. Such a massively positive witness places insurmountable strain on standard classical open-quantum-system models.

A related open question is whether transient DQ bursts may be further constrained by finite-rate transport limits. In a broader setting, a Bures--Wasserstein-type information geometry \cite{Bhatia2019} may provide a natural language for quantifying the covariance-transport cost associated with accessing a transient boundary-limited regime of the embodied substrate \cite{Deffner2017}. The companion paper~\cite{Paper3} develops a covariance-geometric route from Bures--Wasserstein boundary dynamics to an observable pair-sector signal in the same DQ/SU(1,1) sector bounded here. Exploring the relation between the thermodynamic limits derived here and such covariance-geometric descriptions is a natural direction for future work.

\appendix

\section{Compact algebras cannot produce hyperbolic growth}
\label{app:hyperbolicity}

Under compact \(\mathfrak{su}(2)_{\mathrm{ZQ}}\) evolution generated by
\(H = J(I_{1+}I_{2-}+I_{1-}I_{2+})\), any expectation value
\(\langle O(t)\rangle\) with
\(O \in \mathfrak{su}(2)_{\mathrm{ZQ}}\) satisfies
\[
  \langle O(t)\rangle = \sum_k c_k e^{i\lambda_k t},
\]
where the adjoint eigenvalues \(\lambda_k\) are purely imaginary.
Hence the evolution is a finite sum of oscillatory terms; hyperbolic
growth is impossible within the compact algebra.

For \(\mathfrak{su}(1,1)\), the Killing form has indefinite signature,
the adjoint representation admits real eigenvalues, and the evolution
includes \(\cosh(gt)\) and \(\sinh(gt)\) terms.



\begin{thebibliography}{99}

\bibitem{Pezze2018}
L. Pezz\`e, A. Smerzi, M. K. Oberthaler, R. Schmied, and P. Treutlein,
\textit{Quantum metrology with nonclassical states of atomic ensembles},
Rev. Mod. Phys. \textbf{90}, 035005 (2018).

\bibitem{Baum1985}
J. Baum, M. Munowitz, A. N. Garroway, and A. Pines,
\textit{Multiple-quantum dynamics in solid state NMR},
J. Chem. Phys. \textbf{83}, 2015 (1985).

\bibitem{Perelomov1986}
A. Perelomov,
\textit{Generalized Coherent States and Their Applications}
(Springer-Verlag, Berlin, 1986).

\bibitem{Ernst1990}
R. R. Ernst, G. Bodenhausen, and A. Wokaun,
\textit{Principles of Nuclear Magnetic Resonance in One and Two Dimensions}
(Clarendon Press, Oxford, 1990).

\bibitem{GerryKnight2004}
C. C. Gerry and P. L. Knight,
\textit{Introductory Quantum Optics}
(Cambridge University Press, Cambridge, 2004).

\bibitem{Breuer2002}
H.-P. Breuer and F. Petruccione,
\textit{The Theory of Open Quantum Systems}
(Oxford University Press, Oxford, 2002).

\bibitem{Bloembergen1948}
N. Bloembergen, E. M. Purcell, and R. V. Pound,
\textit{Relaxation Effects in Nuclear Magnetic Resonance Absorption},
Phys. Rev. \textbf{73}, 679 (1948).

\bibitem{Abragam1961}
A. Abragam,
\textit{The Principles of Nuclear Magnetism}
(Oxford University Press, Oxford, 1961).

\bibitem{Spohn1978}
H. Spohn,
\textit{Entropy production for quantum dynamical semigroups},
J. Math. Phys. \textbf{19}, 1227 (1978).

\bibitem{Ogawa1990}
S. Ogawa, T. M. Lee, A. R. Kay, and D. W. Tank,
\textit{Brain magnetic resonance imaging with contrast dependent on blood oxygenation},
Proc. Natl. Acad. Sci. U.S.A. \textbf{87}, 9868 (1990).

\bibitem{Henkelman2001}
R. M. Henkelman, G. J. Stanisz, and S. J. Graham,
\textit{Magnetization transfer in MRI: a review},
NMR Biomed. \textbf{14}, 57 (2001).

\bibitem{Kerskens2022}
C. M. Kerskens and D. P\'erez,
\textit{Experimental indications of non-classical brain functions},
J. Phys. Commun. \textbf{6}, 105001 (2022).

\bibitem{Bhatia2019}
R. Bhatia, T. Jain, and Y. Lim,
\textit{On the Bures-Wasserstein distance between positive definite matrices},
Expositiones Mathematicae \textbf{37}, 165 (2019).

\bibitem{Deffner2017}
S. Deffner and S. Campbell,
\textit{Quantum Thermodynamics: An introduction to the thermodynamics of quantum information}
(Morgan \& Claypool Publishers, 2017).

\bibitem{Paper3}
C. Kerskens,
\textit{Evidence for Bures--Wasserstein Boundary Dynamics in the Living Human Brain},
arXiv:2505.22680 [q-bio.NC] (2026).
\url{https://arxiv.org/abs/2505.22680}

\end{thebibliography}
\end{document}